\newtheorem{theorem}{Theorem}[section]
\newtheorem{lemma}[theorem]{Lemma}
\newtheorem{corollary}[theorem]{Corollary}
\theoremstyle{definition}
\newtheorem{note}[theorem]{Note}
\theoremstyle{remark}
\begin{document}

\title[Powers of Bessel functions]
{On polynomials connected to powers of Bessel functions}

\author[]{Victor H. Moll}
\address{Department of Mathematics,
Tulane University, New Orleans, LA 70118}
\email{vhm@tulane.edu}

\author[]{Christophe Vignat}
\address{Department of Mathematics,
Tulane University, New Orleans, LA 70118}
\email{cvignat@tulane.edu}

\subjclass[2010]{Primary 33C10}

\date{\today}

\keywords{Bessel functions, Bessel zeta functions, Bell polynomials}

\begin{abstract}
The series expansion of a power of the modified Bessel function of the first 
kind is studied. This expansion involves a family of polynomials introduced 
by C. Bender et al. New results on these polynomials established here 
include recurrences in terms of Bell polynomials 
evaluated at values of the Bessel zeta function. A probabilistic version of an 
identity of Euler yields additional recurrences. Connections to 
the umbral formalism on Bessel functions introduced by 
Cholewinski are established.
\end{abstract}

\maketitle

\newcommand{\ba}{\begin{eqnarray}}
\newcommand{\ea}{\end{eqnarray}}
\newcommand{\ift}{\int_{0}^{\infty}}
\newcommand{\nn}{\nonumber}
\newcommand{\no}{\noindent}
\newcommand{\lf}{\left\lfloor}
\newcommand{\rf}{\right\rfloor}
\newcommand{\realpart}{\mathop{\rm Re}\nolimits}
\newcommand{\imagpart}{\mathop{\rm Im}\nolimits}

\newcommand{\op}[1]{\ensuremath{\operatorname{#1}}}
\newcommand{\pFq}[5]{\ensuremath{{}_{#1}F_{#2} \left( \genfrac{}{}{0pt}{}{#3}
{#4} \bigg| {#5} \right)}}

\newtheorem{Definition}{\bf Definition}[section]
\newtheorem{Thm}[Definition]{\bf Theorem}
\newtheorem{Example}[Definition]{\bf Example}
\newtheorem{Lem}[Definition]{\bf Lemma}
\newtheorem{Cor}[Definition]{\bf Corollary}
\newtheorem{Prop}[Definition]{\bf Proposition}
\numberwithin{equation}{section}

\section{Introduction}
\label{sec-intro}

Starting with the power series expansion
\begin{equation}
f(z) = \sum_{n=0}^{\infty} \frac{1}{a_{n}} \frac{z^{n}}{n!}, \quad 
a_{n} \neq 0, 
\end{equation}
\noindent
the corresponding expansion for $\left[f(z)\right]^{r}, \, r \in \mathbb{N}$
\begin{equation}
\left[ f(z) \right]^{r} = \sum_{n=0}^{\infty} A_{n}(r) \frac{1}{a_{n}} 
\frac{z^{n}}{n!}
\label{def-A}
\end{equation}
\noindent 
defines the coefficients $A_{n}(r)$. For instance, $A_{0}(r) = 1/a_{0}^{r-1}$. 
The identity $\left[ f(z) \right]^{r+1} = \left[f(z) \right]^{r} \times f(z)$ 
produces
\begin{equation}
A_{n}(r+1) = \sum_{j=0}^{n} 
\left( n \atop j \right)_{\mathbf{a}} 
A_{j}(r),
\label{recu-A}
\end{equation}
\noindent
where the generalized binomial coefficients $\left( n \atop j \right)_{\mathbf{a}}$ are defined as 
\begin{equation}
\left( {n \atop j} \right)_{\mathbf{a}} = 
\binom{n}{j} \frac{a_{n}}{a_{j}a_{n-j}}.
\end{equation}

The example motivating the results presented here comes from work by 
C. Bender at al \cite{bender-2003a} on the normalized Bessel function
\begin{equation}
\tilde{I}_{\nu}(z) = \frac{\nu! 2^{\nu}}{z^{\nu}} I_{\nu}(z)
= \sum_{k=0}^{\infty} \frac{\nu!}{k! (k + \nu)!}
\left( \frac{z}{2} \right)^{2k+\nu}.
\label{bess-nor00}
\end{equation}
\noindent
Here 
\begin{equation}
I_{\nu}(z) = \sum_{k=0}^{\infty} \frac{1}{k! (k + \nu)!} 
\left( \frac{z}{2} \right)^{2k+\nu},
\end{equation}
is the \textit{modified Bessel function of the first kind}. The main result in 
\cite{bender-2003a} is stated next.

\begin{theorem}[Bender et al.]
For $r \in \mathbb{N}$, the power series expansion 
\begin{equation}
\left[ \tilde{I}_{\nu}(z) \right]^{r} = 
\sum_{k=0}^{\infty} \frac{\nu!}{k! (k+ \nu)!} B_{k}^{(\nu)}(r) 
\left( \frac{z}{2} \right)^{2k},
\label{expansion-1}
\end{equation}
\noindent 
holds for polynomials $B_{k}^{(\nu)}(r)$, determined recursively by 
\begin{equation}
B_{k}^{(\nu)}(r) = 
r \frac{\nu+k}{\nu+1} B_{k-1}^{(\nu)}(r) + 
\sum_{j=2}^{k} \frac{b_{j}(\nu)}{k} \frac{(\nu+1)!}{(\nu+1+j)!} 
\binom{\nu+k}{j} B_{k-j}^{(\nu)}(r)
\label{B-def}
\end{equation}
\noindent
with initial condition 
$B_{0}^{(\nu)}(r) = 1$. The sequence $b_{j}(\nu)$ has the generating 
function 
\begin{equation}
\sum_{k=1}^{\infty} \frac{b_{k}(\nu)}{(\nu+k)! (k-1)!}x^{k} = 
\left( \frac{\sqrt{x}}{\nu+1} \frac{I_{\nu}(2 \sqrt{x})}{I_{\nu+1}(2 \sqrt{x})}
- 2 \right) \frac{x}{(\nu+1)!}.
\end{equation}
\end{theorem}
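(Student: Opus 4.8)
The plan is to pass to the variable $x = (z/2)^2$ and set $g(x) = \tilde{I}_{\nu}(z)$, so that \eqref{expansion-1} becomes $g(x)^{r} = \sum_{k\ge 0}\frac{\nu!}{k!(k+\nu)!}B_{k}^{(\nu)}(r)\,x^{k}$; the theorem is then a statement about the Taylor coefficients of a fixed analytic function raised to the $r$-th power. The driving input is the modified Bessel equation: substituting $\tilde{I}_{\nu} = \nu!\,2^{\nu}z^{-\nu}I_{\nu}$ into $z^{2}I_{\nu}'' + zI_{\nu}' - (z^{2}+\nu^{2})I_{\nu} = 0$ shows that $\tilde{I}_{\nu}$ solves $z\phi'' + (2\nu+1)\phi' - z\phi = 0$, equivalently $xg'' + (\nu+1)g' - g = 0$ in the $x$ variable. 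From the derivative formula $I_{\nu}'(z) = I_{\nu+1}(z) + \frac{\nu}{z}I_{\nu}(z)$ I would record the two facts actually used, namely that the logarithmic derivative collapses and that its reciprocal is the very series the sequence $b_{j}(\nu)$ is meant to encode:
\[ \frac{\tilde{I}_{\nu}'(z)}{\tilde{I}_{\nu}(z)} = \frac{I_{\nu+1}(z)}{I_{\nu}(z)}, \qquad \frac{g}{g'} = \sqrt{x}\,\frac{I_{\nu}(2\sqrt{x})}{I_{\nu+1}(2\sqrt{x})}. \]

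Next I would convert the $r$-th power into a recurrence by logarithmic differentiation. Writing $Y = g^{r}$, the identity $Y' = r\frac{g'}{g}Y$, equivalently $gY' = rg'Y$, holds termwise. Substituting $Y = \sum_{k}d_{k}x^{k}$ with $d_{k} = \frac{\nu!}{k!(k+\nu)!}B_{k}^{(\nu)}(r)$ and reading off the coefficient of $x^{k-1}$ produces a linear relation expressing $d_{k}$ through the lower coefficients $d_{k-1}, d_{k-2}, \dots$, each weighted by a Taylor coefficient of $g'/g$. The constant term $g'(0)/g(0) = 1/(\nu+1)$ is responsible for the $B_{k-1}^{(\nu)}(r)$ contribution; after clearing the normalizing factor $\nu!/(k!(k+\nu)!)$ this becomes the leading term $r\frac{\nu+k}{\nu+1}B_{k-1}^{(\nu)}(r)$ of \eqref{B-def}, while the remaining Taylor coefficients, renormalized the same way, are what one names $b_{j}(\nu)$ and assembles into the sum over $j\ge 2$. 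Because the resulting recurrence is linear in the lower $B^{(\nu)}(r)$ with coefficients that are polynomials in $r$, the polynomiality of $B_{k}^{(\nu)}(r)$ follows at once by induction from $B_{0}^{(\nu)}(r) = 1$.

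Finally I would pin down the closed form of the generating function by inverting the series $g'/g$, equivalently by expanding the reciprocal Bessel ratio $I_{\nu}/I_{\nu+1}$ via the recurrence for $I_{\nu}$, and matching the outcome against the indexing weight $\frac{1}{(\nu+k)!(k-1)!}$; the prefactor $\frac{x}{(\nu+1)!}$ and the additive constant $-2$ are exactly the normalization that aligns the expansion of $\frac{\sqrt{x}}{\nu+1}\frac{I_{\nu}(2\sqrt{x})}{I_{\nu+1}(2\sqrt{x})}$ with the coefficients $b_{k}(\nu)$. The main obstacle is not conceptual but the combinatorial reconciliation of the raw convolution coming from $gY' = rg'Y$ with the explicit weights $\frac{1}{k}\frac{(\nu+1)!}{(\nu+1+j)!}\binom{\nu+k}{j}$ in \eqref{B-def}: one must verify that the factorials and falling factorials produced by the normalization $\nu!/(k!(k+\nu)!)$ collapse precisely to these binomial coefficients. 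I would organize that step as an induction on $k$, using the functional identity $gY' = rg'Y$ as the inductive engine, and cross-check the low-order cases $k=1,2$ by hand against the direct expansion of $g^{r}$ to make sure the normalizing constants have been tracked correctly.
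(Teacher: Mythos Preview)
The paper does not supply its own proof of this theorem: it is quoted from Bender, Brody and Meister \cite{bender-2003a} as the result motivating the paper, and the subsequent sections develop \emph{alternative} recurrences for $B_{k}^{(\nu)}(r)$ (via the generalized binomial identity in Section~\ref{sec-polynomials-direct}, Bell polynomials and the Bessel zeta function in Section~\ref{sec-bessel-zeta}, and Euler's power-series identity in Section~\ref{sec-recurrence2}). So there is no in-paper proof to compare against; your sketch is essentially the standard derivation, presumably close to what Bender et al.\ do.

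The outline is sound, but one organizational point deserves tightening. To land on the precise shape of \eqref{B-def}, in which $r$ multiplies only the $B_{k-1}^{(\nu)}(r)$ term and not the tail sum over $j\ge 2$, you should write the relation as $(g/g')\,Y' = rY$ and expand $g/g'$ --- not $g'/g$ --- as a power series in $x$. Your own displayed identity $g/g' = \sqrt{x}\,I_{\nu}(2\sqrt{x})/I_{\nu+1}(2\sqrt{x})$ is exactly what is needed, and it is the higher Taylor coefficients of \emph{this} series that become the $b_{j}(\nu)$ after the stated normalization. If instead you equate coefficients directly in $gY' = rg'Y$, the convolution produces $k\,d_k=\sum_{j=1}^{k}\bigl[(r+1)j-k\bigr]c_j\,d_{k-j}$, which is Euler's identity \eqref{euler-000}; after normalizing by $\nu!/(k!(k+\nu)!)$ this is the paper's recurrence \eqref{recu-1b}, not Bender's \eqref{B-def}. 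The ODE $xg''+(\nu+1)g'-g=0$ you record is correct but not actually used in the argument; the contiguity relation $I_{\nu}'=I_{\nu+1}+\tfrac{\nu}{z}I_{\nu}$ is all that is required.
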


The goal of this work is to present an alternative approach to the 
expansion \eqref{expansion-1}. Section \ref{sec-polynomials-direct} uses the 
identity \eqref{recu-A} to derive an expression for 
$B_{k}^{(\nu)}(r+1)$ in terms of $B_{j}^{(\nu)}(r)$. Section 3 identifies
$B_{k}^{(\nu)}(r)$ as Bell polynomials and produces a new recurrence 
involving the Bessel zeta function. Section 4 uses an identity of Euler to 
derive a second recurrence for
$B_{k}^{(\nu)}(r)$. Section 5 shows that these polynomials are of binomial 
type and a link with Cholewinski's theory of Bessel functions is 
exhibited. The last section is dedicated to arithmetic properties 
of some sequences related to powers of Bessel functions.

\smallskip

\noindent
The notation $\nu! = \Gamma(\nu+1)$ is employed throughout. 

\section{A first identity for $B_{n}^{(\nu)}(r)$.}
\label{sec-polynomials-direct}

The expansion of $\tilde{I}_{\nu}(z)$, written as 
\begin{equation}
\tilde{I}_{\nu}(\sqrt{z}) = \sum_{k=0}^{\infty} \frac{1}{a_{k}^{(\nu)}} 
\frac{z^{k}}{k!}
\end{equation}
\noindent
with 
\begin{equation}
a_{k}^{(\nu)} = \frac{(k + \nu)!}{\nu!} 2^{2k},
\end{equation}
\noindent
shows that 
the coefficients $A_{k}(r)$ in \eqref{def-A} are precisely the polynomials 
$B_{k}^{(\nu)}(r)$ in \eqref{expansion-1}.  The 
fact that $B_{n}^{(\nu)}(r)$ are polynomials, not a priori clear from 
their definition, follows from \eqref{recu-A}.

\begin{theorem}
The functions $B_{n}^{(\nu)}(r)$ satisfy
\begin{equation}
B_{n}^{(\nu)}(r+1) = \sum_{j=0}^{n} \binom{n}{j} 
\frac{(n+\nu)! \, \nu!}{(\nu+j)! \, (n-j+\nu)!} B_{j}^{(\nu)}(r).
\label{first-relation}
\end{equation}
\end{theorem}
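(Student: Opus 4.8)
The plan is to observe that \eqref{first-relation} is precisely the general recurrence \eqref{recu-A} applied to the weight sequence identified at the start of this section. Indeed, the expansion of $\tilde I_\nu(\sqrt z)$ records that $A_k(r) = B_k^{(\nu)}(r)$ and that $a_k = a_k^{(\nu)} = (k+\nu)!\,2^{2k}/\nu!$, so the entire content of the theorem is the evaluation of the generalized binomial coefficient $\left( {n \atop j} \right)_{\mathbf{a}}$ for this particular $\mathbf{a}$.

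First I would substitute $a_n,\,a_j,\,a_{n-j}$ into the definition $\left( {n \atop j} \right)_{\mathbf{a}} = \binom{n}{j} a_n/(a_j a_{n-j})$ and simplify. The powers of two cancel, $2^{2n}/(2^{2j}\,2^{2(n-j)}) = 1$, so they drop out completely; the three factors of $1/\nu!$ leave one surviving $\nu!$ in the numerator, and the factorials regroup to give
\begin{equation*}
\left( {n \atop j} \right)_{\mathbf{a}} = \binom{n}{j}\,\frac{(n+\nu)!\,\nu!}{(\nu+j)!\,(n-j+\nu)!}.
\end{equation*}
Inserting this into \eqref{recu-A} reproduces \eqref{first-relation} verbatim, which completes the argument.

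There is no genuine obstacle here: the proof is a direct specialization, and the only thing to monitor is the cancellation of the $2^{2k}$ and $\nu!$ bookkeeping, which is clean. It is worth noting, however, that \eqref{first-relation} settles the polynomiality of $B_n^{(\nu)}(r)$ in $r$ promised above. The $j=n$ term carries coefficient $\left( {n \atop n} \right)_{\mathbf{a}} = 1$, so the relation reads $B_n^{(\nu)}(r+1) - B_n^{(\nu)}(r) = \sum_{j=0}^{n-1} \left( {n \atop j} \right)_{\mathbf{a}} B_j^{(\nu)}(r)$. Arguing by induction on $n$, with base case $B_0^{(\nu)}(r) = 1$, the right-hand side is a polynomial in $r$; since a function whose forward difference in $r$ is a polynomial of degree $d$ is itself a polynomial of degree $d+1$, each $B_n^{(\nu)}(r)$ is a polynomial in $r$.
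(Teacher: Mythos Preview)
Your proposal is correct and is exactly the argument the paper intends: the text immediately preceding the theorem identifies $A_k(r)=B_k^{(\nu)}(r)$ and $a_k^{(\nu)}=(k+\nu)!\,2^{2k}/\nu!$, so the theorem is just \eqref{recu-A} with the generalized binomial coefficient computed for this weight sequence, and your cancellation of the $2^{2k}$ and $\nu!$ factors is right. Your closing remark on polynomiality is likewise the paper's own proof of Corollary~\ref{they-poly}.
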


\begin{corollary}
\label{they-poly}
The coefficients $B_{n}^{(\nu)}(r)$ are polynomials in $r$ of degree $n$.
\end{corollary}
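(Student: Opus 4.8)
The plan is to argue by induction on $n$, treating \eqref{first-relation} as a first-order recurrence in the variable $r$. The base case $n = 0$ is immediate, since $B_{0}^{(\nu)}(r) = 1$ is a polynomial of degree $0$. For the inductive step I would assume that $B_{j}^{(\nu)}(r)$ is a polynomial in $r$ of degree exactly $j$ for every $j < n$. The key observation is that the $j = n$ summand in \eqref{first-relation} carries the coefficient $\binom{n}{n}\tfrac{(n+\nu)!\,\nu!}{(\nu+n)!\,\nu!} = 1$, so that the relation can be rewritten as a pure forward-difference equation,
\[
B_{n}^{(\nu)}(r+1) - B_{n}^{(\nu)}(r) =
\sum_{j=0}^{n-1} \binom{n}{j}
\frac{(n+\nu)! \, \nu!}{(\nu+j)! \, (n-j+\nu)!} B_{j}^{(\nu)}(r).
\]

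Denote the right-hand side by $P(r)$. By the inductive hypothesis $P$ is a polynomial in $r$, and its degree is exactly $n-1$: the $j = n-1$ term equals $\tfrac{n(n+\nu)}{\nu+1}\,B_{n-1}^{(\nu)}(r)$, which has degree $n-1$ with nonzero leading coefficient, while every term with $j < n-1$ has strictly smaller degree. It then remains to pass from the difference $P(r)$ back to $B_{n}^{(\nu)}(r)$. Here I would invoke the discrete analogue of the fundamental theorem of calculus: since the forward difference operator $\Delta$ sends the falling factorial $r^{\underline{k}} = r(r-1)\cdots(r-k+1)$ to $k\,r^{\underline{k-1}}$, every polynomial of degree $n-1$ admits an antidifference that is a polynomial of degree $n$. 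Fixing such an antidifference $Q$ of $P$, the function $B_{n}^{(\nu)} - Q$ has vanishing forward difference on the nonnegative integers, hence is constant there, so $B_{n}^{(\nu)}(r) = Q(r) + c$ is a polynomial of degree exactly $n$, which closes the induction.

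The only point requiring genuine care is the exactness of the degree, namely verifying that the top-order contribution from the $j = n-1$ term survives rather than cancelling. This reduces to confirming that the explicit coefficient $\tfrac{n(n+\nu)}{\nu+1}$ is nonzero, which holds for every admissible $\nu$ and all $n \geq 1$, so the leading term of $P$ is nonvanishing and $Q$ genuinely attains degree $n$. The remaining steps—the binomial computation showing the diagonal coefficient equals $1$, and the term-by-term antidifference in the falling-factorial basis—are routine and need no special argument.
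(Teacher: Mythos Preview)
Your proof is correct and follows essentially the same approach as the paper: both proceed by induction on $n$, isolate the $j=n$ term to rewrite \eqref{first-relation} as a forward-difference equation, and then deduce polynomiality from the inductive hypothesis. The paper compresses the final step into the phrase ``differentiating $n+1$ times with respect to $r$,'' whereas you spell out the antidifference argument and the verification that the degree is exactly $n$; your version is the more careful of the two.
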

\begin{proof}
Proceed by induction on $n$, writing \eqref{first-relation} as 
\begin{equation}
B_{n}^{(\nu)}(r+1) -B_{n}^{(\nu)}(r) = \sum_{j=0}^{n-1} \binom{n}{j} 
\frac{(n+\nu)! \, \nu!}{(\nu+j)! \, (n-j+\nu)!} B_{j}^{(\nu)}(r).
\end{equation}
\noindent
The result follows by differentiating $n+1$ times with respect to $r$.
\end{proof}
First examples of these polynomials are:
\[
B_{0}^{(\nu)}(r)=1,\,\,B_{1}^{(\nu)}(r)=r,\,\,B_{2}^{(\nu)}(r) = \frac{\nu+2}{\nu+1}r^2-\frac{1}{\nu+1}r
\]
\[
B_{3}^{(\nu)}(r)=\frac{\left(\nu+3\right)\left(\nu+2\right)}{\left(\nu+1\right)^2}r^3-3\frac{\nu+3}{\left(\nu+1\right)^2 }r^2+\frac{4}{\left(\nu+1\right)^2}r
\].

\begin{note}
The identity \eqref{first-relation}, established first for $r \in 
\mathbb{N}$, naturally holds for $r \in \mathbb{R}$. The same principle applies
to the recurrences involving $B_{n}^{(\nu)}(r)$ presented below. 
\end{note}

\section{An identity for $B_{n}^{(\nu)}(r)$ in terms of the Bessel zeta 
function.}
\label{sec-bessel-zeta}

This section provides an expression for the polynomials
$B_{n}^{(\nu)}(r)$, starting from
\begin{equation}
\left[ \tilde{I}_{\nu}(z) \right]^{r} = 
\exp \left[ r \, \log \tilde{I}_{\nu}(z) \right]
\label{bessel-exp1}
\end{equation}
\noindent
and using the Hadamard factorization
\begin{equation}
\tilde{I}_{\nu}(z) = \prod_{k=1}^{\infty} 
\left( 1 + \frac{z^{2}}{j_{\nu,k}^{2}} \right).
\label{factor-bessel}
\end{equation}
\noindent
Here $\{ j_{\nu,k} \}_{k\geq 1}$ is the sequence of zeros of
$\tilde{J}_{\nu}(z) = \tilde{I}_{\nu}(\imath z)$.

The exponential of a power series is computed by
\begin{equation}
\exp \left[ \sum_{p=1}^{\infty} a_{p} \,  \frac{z^{p}}{p!} \right] = 
\sum_{p=0}^{\infty} \mathfrak{B}_{n}(a_{1},\cdots,a_{n})\, \frac{z^{n}}{n!},
\label{bell-1}
\end{equation}
\noindent
where  $\mathfrak{B}_{n}(a_{1},\cdots,a_{n})$ is the $n$-th \textit{complete 
Bell polynomial}. Details
appear in \cite[p.173]{riordan-1968a}. The first few examples are 
\begin{multline*}
\mathfrak{B}_{0} = 1, \,  \mathfrak{B}_{1}(a_{1}) = a_{1},  \, 
\mathfrak{B}_{2}(a_{1},a_{2}) = a_{1}^{2} + a_{2}, \, 
\mathfrak{B}_{3}(a_{1},a_{2},a_{3}) = a_{1}^{3} + 3a_{1}a_{2}+a_{3}.
\end{multline*}
\noindent
From \eqref{factor-bessel}, it follows that 
\begin{equation}
\log \tilde{I}_{\nu}(z) = \sum_{p=1}^{\infty} \frac{(-1)^{p+1}}{p} 
\zeta_{\nu}(2p) z^{2p},
\label{prod-111}
\end{equation}
\noindent
where $\zeta_{\nu}(p)$ is the \textit{Bessel zeta function} 
\cite{kishore-1964a}, defined by
\begin{equation}
\zeta_{\nu}(p) = \sum_{k=1}^{\infty} \frac{1}{j_{\nu,k}^{p}},\,\,p>1.
\end{equation}

A direct application of \eqref{bell-1} yields the next result.

\begin{theorem}
Define 
\begin{equation}
a_{n} = a_{n}(r) = (-1)^{n-1} (n-1)! \zeta_{\nu}(2n) r. 
\label{def-a}
\end{equation}
\noindent 
Then $B_{n}^{(\nu)}(r)$ is given by
\begin{equation}
B_{n}^{(\nu)}(r) = 2^{2n} \frac{(n+ \nu)!}{\nu!} 
\mathfrak{B}_{n}(a_{1}(r),\cdots,a_{n}(r)).
\label{bn-bell}
\end{equation}
\end{theorem}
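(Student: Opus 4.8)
The plan is to read off \eqref{bn-bell} by expanding the exponential in \eqref{bessel-exp1} with the complete Bell polynomial formula \eqref{bell-1} and matching coefficients against the defining expansion \eqref{expansion-1}. Substituting \eqref{prod-111} into \eqref{bessel-exp1} gives
\begin{equation}
\left[\tilde{I}_{\nu}(z)\right]^{r} = \exp\left[\sum_{p=1}^{\infty} \frac{(-1)^{p+1}}{p}\, \zeta_{\nu}(2p)\, r\, z^{2p}\right].
\end{equation}
The essential observation is that only even powers of $z$ occur, so it is natural to pass to the variable $w = (z/2)^{2}$, in which both this series and the target expansion \eqref{expansion-1} become genuine power series in $w$. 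Writing $z^{2p} = 2^{2p}w^{p}$, the argument of the exponential becomes $\sum_{p\geq 1}\alpha_{p}\,w^{p}/p!$ with $\alpha_{p} = 2^{2p} a_{p}(r)$, where $a_{p}(r)$ is exactly the quantity defined in \eqref{def-a}; this is the step that fixes the normalization of the $a_{p}$.

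Next I would apply \eqref{bell-1} in the variable $w$ to obtain
\begin{equation}
\left[\tilde{I}_{\nu}(z)\right]^{r} = \sum_{n=0}^{\infty} \mathfrak{B}_{n}(\alpha_{1},\dots,\alpha_{n})\, \frac{w^{n}}{n!},
\end{equation}
and compare this with \eqref{expansion-1}, which in the variable $w$ reads $\sum_{n}\frac{\nu!}{n!(n+\nu)!}\,B_{n}^{(\nu)}(r)\,w^{n}$. Matching the coefficient of $w^{n}$ yields the intermediate formula $B_{n}^{(\nu)}(r) = \frac{(n+\nu)!}{\nu!}\,\mathfrak{B}_{n}(\alpha_{1},\dots,\alpha_{n})$, which already has the right shape except for the presence of $\alpha_{p}$ in place of $a_{p}(r)$ and the missing factor $2^{2n}$.

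It remains to convert $\mathfrak{B}_{n}(\alpha_{1},\dots,\alpha_{n})$ into $2^{2n}\mathfrak{B}_{n}(a_{1},\dots,a_{n})$. For this I would invoke the weighted homogeneity of the complete Bell polynomials: every monomial $\prod_{i} x_{i}^{k_{i}}$ appearing in $\mathfrak{B}_{n}(x_{1},\dots,x_{n})$ has weight $n$, i.e. $\sum_{i} i\,k_{i} = n$, so replacing $x_{i}$ by $t^{i}x_{i}$ multiplies it by $t^{n}$, giving $\mathfrak{B}_{n}(tx_{1},t^{2}x_{2},\dots,t^{n}x_{n}) = t^{n}\,\mathfrak{B}_{n}(x_{1},\dots,x_{n})$. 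Taking $t = 2^{2}$ and $x_{p}=a_{p}(r)$, so that $t^{p}x_{p} = 2^{2p}a_{p}(r) = \alpha_{p}$, gives $\mathfrak{B}_{n}(\alpha_{1},\dots,\alpha_{n}) = 2^{2n}\mathfrak{B}_{n}(a_{1},\dots,a_{n})$, which is precisely \eqref{bn-bell}.

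The only genuine subtlety is bookkeeping: the formula \eqref{bell-1} is stated for a general series $\sum a_{p}z^{p}/p!$, whereas here the exponent carries only even powers, and it is the change of variable $w=(z/2)^{2}$ together with the scaling identity above that simultaneously accounts for the factor $2^{2n}$ and reconciles the coefficients $\alpha_{p}$ with the definition \eqref{def-a}. As a consistency check one can verify the first cases against the classical Rayleigh values $\zeta_{\nu}(2)=\tfrac{1}{4(\nu+1)}$ and $\zeta_{\nu}(4)=\tfrac{1}{16(\nu+1)^{2}(\nu+2)}$, which reproduce $B_{1}^{(\nu)}(r)=r$ and $B_{2}^{(\nu)}(r)=\tfrac{\nu+2}{\nu+1}r^{2}-\tfrac{1}{\nu+1}r$.
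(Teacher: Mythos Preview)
Your argument is correct and is exactly the ``direct application of \eqref{bell-1}'' that the paper alludes to, carried out in full: the change of variable $w=(z/2)^{2}$, the identification $\alpha_{p}=2^{2p}a_{p}(r)$, and the weighted homogeneity $\mathfrak{B}_{n}(t x_{1},\dots,t^{n}x_{n})=t^{n}\mathfrak{B}_{n}(x_{1},\dots,x_{n})$ with $t=4$ are precisely the bookkeeping steps behind the paper's one-line proof. (Incidentally, your value $\zeta_{\nu}(4)=\tfrac{1}{16(\nu+1)^{2}(\nu+2)}$ is the correct one and is what makes the check for $B_{2}^{(\nu)}(r)$ come out right; the value printed later in the paper has a typo.)
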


A recurrence for $B_{n}^{(\nu)}(r)$ is now obtained from the classical identity \cite[p.174]{riordan-1968a}
\begin{equation}
\mathfrak{B}_{n}(a_{1},\cdots, a_{n}) = \sum_{k=0}^{n-1} 
\binom{n-1}{k} a_{k+1}\mathfrak{B}_{n-1-k}(a_{1},\cdots, a_{n-1-k})
\label{recu-bell}
\end{equation}
\noindent
for the complete Bell polynomials.

\begin{theorem}
\label{recu-001}
The polynomials $B_{n}^{(\nu)}(r)$ satisfy the recurrence 
\begin{equation*}
B_{n}^{(\nu)}(r) = r (\nu + n)! 
\sum_{k=0}^{n-1} \binom{n-1}{k} \frac{(-1)^{k} \, k!}
{(\nu + n - k -1 )!} 2^{2k+2} \zeta_{\nu}(2k+2) 
B_{n-1-k}^{(\nu)}(r).
\end{equation*}
\noindent
with initial condition $B_{0}^{(\nu)}(r) = 1$.
\end{theorem}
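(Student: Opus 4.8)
The plan is to treat this as a direct algebraic consequence of the Bell-polynomial representation \eqref{bn-bell} together with the recurrence \eqref{recu-bell}, converting each term that appears back into normalized form. First I would record the two elementary facts that turn the statement into pure bookkeeping. Evaluating the definition \eqref{def-a} at index $k+1$ gives
\[
a_{k+1}(r) = (-1)^{k} \, k! \, \zeta_{\nu}(2k+2) \, r,
\]
and solving \eqref{bn-bell} for the Bell polynomial of order $m$ yields
\[
\mathfrak{B}_{m}(a_{1},\ldots,a_{m}) = \frac{\nu!}{2^{2m}(m+\nu)!} \, B_{m}^{(\nu)}(r).
\]

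Next I would insert the recurrence \eqref{recu-bell}, applied to the argument list $(a_{1}(r),\ldots,a_{n}(r))$, into the right-hand side of \eqref{bn-bell}, obtaining
\[
B_{n}^{(\nu)}(r) = 2^{2n} \frac{(n+\nu)!}{\nu!} \sum_{k=0}^{n-1} \binom{n-1}{k} a_{k+1}(r) \, \mathfrak{B}_{n-1-k}(a_{1},\ldots,a_{n-1-k}).
\]
Substituting the two closed forms above, the second with $m = n-1-k$, replaces $a_{k+1}$ and the inner Bell polynomial. The factors $\nu!$ cancel, the constant $r$ and the factorial $(n+\nu)!$ come out in front of the sum, and the powers of two collapse according to $2^{2n}/2^{2(n-1-k)} = 2^{2k+2}$. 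What remains is exactly the asserted recurrence, with $\zeta_{\nu}(2k+2)$, the factor $(-1)^{k} k!$, and the denominator $(\nu+n-k-1)!$ appearing term by term. The boundary term $k = n-1$ reduces to $\mathfrak{B}_{0} = 1$, which is consistent with the stated initial condition $B_{0}^{(\nu)}(r) = 1$.

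The only point requiring care is the bookkeeping of the factorials and the powers of two; there is no genuine analytic obstacle, since both the Hadamard factorization \eqref{factor-bessel} and the Bell-polynomial expansion \eqref{bn-bell} are already available. The appearance of $\zeta_{\nu}(2k+2)$ and of the factor $2^{2k+2}$ in the final recurrence is traced directly to the index $k+1$ inside $a_{k+1}$ together with the collapse of the two powers of two, and keeping these aligned across the substitution is where the manipulation must be done attentively. Everything else is mechanical, so I expect the full proof to be short.
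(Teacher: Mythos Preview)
Your proposal is correct and matches the paper's approach exactly: the paper's proof is the single line ``Simply replace \eqref{bn-bell} in \eqref{recu-bell},'' and your write-up carries out precisely that substitution with the bookkeeping made explicit.
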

\begin{proof}
Simply replace \eqref{bn-bell} in \eqref{recu-bell}.
\end{proof}

\begin{note}
The recurrence above provides a new proof of Corollary \ref{they-poly}.  
However, it 
is not easy to use, since there is no 
explicit expression for the coefficients. These involve the values 
$\zeta_{\nu}(2k)$, that can be obtained from 
\begin{equation}
\label{recu-zeta}
(n + \nu) \zeta_{\nu}(2n) = 
\sum_{r=1}^{n-1} \zeta_{\nu}(2r) \zeta_{\nu}(2n-2r),
\end{equation}
\noindent
established in \cite{elizalde-1993a}. The initial condition 
$\zeta_{\nu}(2) = \frac{1}{4(\nu+1)}$ shows that 
$\zeta_{\nu}(2n)$ is 
a rational function of $\nu$. These have recently appeared in 
connection with Narayana polynomials 
\cite{amdeberhan-2013a}. The first 
few are 
\begin{equation*}
\zeta_{\nu}(4) = \frac{1}{16(\nu+1)^{3}}, \, 
\zeta_{\nu}(6) = \frac{1}{16(\nu+1)^{4}(2 \nu+3)}, \, 
\zeta_{\nu}(8) = \frac{10 \nu + 11}{256(\nu+1)^{6}(2 \nu^{2} + 7 \nu + 6)}.
\end{equation*}
\noindent
A more explicit recurrence for $B_{n}^{(\nu)}(r)$ is given in the next 
section.
\end{note}

\section{A second recurrence for the polynomials $B_{n}^{(\nu)}(r)$.}
\label{sec-recurrence2}

This section describes a new recurrence for the polynomials 
$B_{n}^{(\nu)}(r)$. The proof is based on a probabilistic interpretation 
of a beautiful result of L. Euler, recently used by 
A. Baricz \cite{baricz-2010a} to discuss properties for powers of 
Bessel functions. 

\begin{theorem}[Euler]
Let 
\begin{equation}
f(x) = \sum_{n=0}^{\infty} c_{n}x^{n}
\end{equation}
\noindent
and assume $c_{0} \neq 0$. The coefficients $d_{n}$ in 
\begin{equation}
\left( f(x) \right)^{r} = \sum_{n=0}^{\infty} d_{n}x^{n}
\end{equation}
\noindent
are given by the recurrence 
\begin{equation}
d_{n} = \frac{1}{c_{0}} \sum_{k=1}^{n} \left( \frac{k}{n} (r+1) - 1 
\right) c_{k} d_{n-k}, \,\,n\ge1
\label{euler-000}
\end{equation}
\noindent
with initial condition $d_{0} = c_{0}^{r}$.
\end{theorem}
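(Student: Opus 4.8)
The plan is to prove Euler's recurrence by logarithmic differentiation, which converts the multiplicative relation $g = f^r$ into an additive one amenable to power-series manipulation. First I would set $g(x) = (f(x))^r = \sum_{n\ge 0} d_n x^n$ and take the logarithmic derivative, writing $g'(x) = r f(x)^{r-1} f'(x)$. Multiplying through by $f(x)$ gives the key differential identity
\begin{equation*}
f(x) \, g'(x) = r \, f'(x) \, g(x),
\end{equation*}
which holds as an identity of formal power series (the hypothesis $c_0 \neq 0$ guarantees $f$ is invertible in the ring of formal power series, so all the objects are well defined). This step is purely algebraic and avoids any convergence concerns.

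Next I would extract the coefficient of $x^{n-1}$ from both sides. The left-hand side $f(x) g'(x)$ has $x^{n-1}$-coefficient $\sum_{j=0}^{n} c_{n-j} \cdot j d_j$, since $g'(x) = \sum_j j d_j x^{j-1}$. The right-hand side $r f'(x) g(x)$ has $x^{n-1}$-coefficient $r \sum_{j=0}^{n} (n-j) c_{n-j} d_j$, since $f'(x) = \sum_k k c_k x^{k-1}$. Equating these yields, for each $n \ge 1$,
\begin{equation*}
\sum_{j=0}^{n} j \, c_{n-j} \, d_j = r \sum_{j=0}^{n} (n-j) \, c_{n-j} \, d_j.
\end{equation*}
The $j = n$ term on the left is $n c_0 d_n$, while the $j=n$ term on the right vanishes; isolating $d_n$ then gives
\begin{equation*}
n c_0 d_n = \sum_{j=0}^{n-1} \bigl( r(n-j) - j \bigr) c_{n-j} d_j.
\end{equation*}

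Finally I would reindex the sum by setting $k = n - j$, so $j = n-k$ ranges over $0 \le j \le n-1$ as $k$ runs from $1$ to $n$, and the coefficient becomes $rk - (n-k) = k(r+1) - n$. Dividing by $n c_0$ produces
\begin{equation*}
d_n = \frac{1}{c_0} \sum_{k=1}^{n} \Bigl( \frac{k}{n}(r+1) - 1 \Bigr) c_k \, d_{n-k},
\end{equation*}
which is exactly \eqref{euler-000}. The initial condition $d_0 = c_0^r$ is immediate from evaluating $(f(x))^r$ at $x = 0$. I expect no serious obstacle here: the only point requiring a little care is the bookkeeping in the coefficient extraction and the reindexing, and the observation that the $j=n$ term on the right-hand side drops out is what makes $d_n$ solvable in terms of lower-order coefficients. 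An alternative route would treat $r$ as a formal parameter and verify the identity directly, but the logarithmic-derivative argument is cleaner and makes the structure transparent.
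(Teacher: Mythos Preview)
Your argument is correct: the identity $f g' = r f' g$ followed by coefficient extraction and the reindexing $k = n-j$ yields exactly the stated recurrence, and your bookkeeping is accurate.

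Note, however, that the paper does not actually prove this theorem. It is quoted as a classical result of Euler (with a pointer to Baricz's recent use of it) and stated without proof; the paper then proceeds to a probabilistic \emph{counterpart} (the lemma on i.i.d.\ random variables with vanishing odd moments), explicitly remarking that the proof of that lemma does not rely on Euler's formula. So there is nothing to compare your proof against here. Your logarithmic-differentiation argument is the standard one and would be a perfectly appropriate way to fill in the omitted proof.
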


A probabilistic counterpart is stated next.  The proof presented here does 
not require to have a priori knowledge of Euler's formula \eqref{euler-000}.

\begin{lemma}
Assume $X_{1}, \, X_{2}, \, \cdots, X_{r+1}$ are $r+1$ independent 
identically distributed random 
variables with vanishing odd moments; that is,
\begin{equation}
\mathbb{E} X_{i}^{2k+1} = 0, \text{ for all } k \in \mathbb{N}.
\end{equation}
\noindent
Then 
\begin{equation}
\mathbb{E} (X_{1}+ \cdots + X_{r+1})^{2n} = 
\frac{r+1}{n} \sum_{k=1}^{n} \binom{2n}{2k} k \mathbb{E} X_{1}^{2k} 
\mathbb{E} (X_{2}+ \cdots + X_{r+1})^{2n-2k}.
\label{mess-1}
\end{equation}
\end{lemma}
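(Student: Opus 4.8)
The plan is to prove the moment identity \eqref{mess-1} directly, using only the multinomial/binomial structure of the expansions and the vanishing-odd-moments hypothesis, without appealing to Euler's formula. Write $S = X_1 + \cdots + X_{r+1}$ and $T = X_2 + \cdots + X_{r+1}$, so $S = X_1 + T$ with $X_1$ independent of $T$. The guiding idea is that a factor of $k$ (equivalently $2k$) inside the sum on the right is the signature of differentiation: introducing the first component via the binomial theorem produces binomial weights, and the weight $k$ arises naturally once one differentiates $S^{2n}$ and uses the homogeneity of the power, à la Euler's relation for homogeneous functions.

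First I would expand, using independence and the binomial theorem,
\begin{equation}
\mathbb{E}\, S^{2n} = \sum_{m=0}^{2n}\binom{2n}{m}\,\mathbb{E}X_1^{m}\,\mathbb{E}T^{2n-m},
\end{equation}
and then discard every odd $m$ using the hypothesis $\mathbb{E}X_1^{2k+1}=0$, leaving
\begin{equation}
\mathbb{E}\, S^{2n} = \sum_{k=0}^{n}\binom{2n}{2k}\,\mathbb{E}X_1^{2k}\,\mathbb{E}T^{2n-2k}.
\label{plan-step1}
\end{equation}
The $k=0$ term is simply $\mathbb{E}T^{2n}$, which must be related back to $\mathbb{E}S^{2n}$ to close the identity. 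The target \eqref{mess-1} differs from \eqref{plan-step1} in that each term carries an extra factor $\tfrac{r+1}{n}\,k$ and the $k=0$ term is absent; so the combinatorial heart of the argument is to show that the weights $\binom{2n}{2k}$ and $\tfrac{r+1}{n}k\binom{2n}{2k}$ produce the same sum once the $k=0$ contribution is accounted for.

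The cleanest route to that equality is a symmetry/differentiation argument. By symmetry of the $r+1$ i.i.d.\ variables, $\mathbb{E}\bigl[X_i\,\partial_{X_i}(X_1+\cdots+X_{r+1})^{2n}\bigr]$ has the same value for each $i$, and summing over $i$ and invoking Euler's homogeneity relation $\sum_i X_i\,\partial_{X_i}S^{2n}=2n\,S^{2n}$ gives $(r+1)\,\mathbb{E}\bigl[X_1\cdot 2n\,S^{2n-1}\bigr]=2n\,\mathbb{E}S^{2n}$, hence
\begin{equation}
\mathbb{E}\,S^{2n} = (r+1)\,\mathbb{E}\bigl[X_1\, S^{2n-1}\bigr].
\label{plan-homog}
\end{equation}
Now expand $S^{2n-1}=(X_1+T)^{2n-1}$ by the binomial theorem inside \eqref{plan-homog}, multiply by the leading $X_1$, discard the odd-moment terms of $X_1$, and the surviving factor $\binom{2n-1}{2k-1}=\tfrac{k}{n}\binom{2n}{2k}$ is exactly what converts the weight into $\tfrac{r+1}{n}k\binom{2n}{2k}$, reproducing \eqref{mess-1} with the $k=0$ term correctly excluded (since $X_1\cdot X_1^{0}=X_1$ contributes an odd moment and vanishes). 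I expect the main obstacle to be bookkeeping rather than conceptual: one must carefully justify exchanging expectation with the formal differentiation (harmless here, as $S^{2n}$ is a polynomial in the $X_i$), and verify the index shift $\binom{2n-1}{2k-1}=\tfrac{k}{n}\binom{2n}{2k}$ together with the correspondence between $\mathbb{E}X_1\cdot X_1^{2k-1}=\mathbb{E}X_1^{2k}$. The homogeneity identity \eqref{plan-homog} is the crux; everything else is a controlled binomial expansion.
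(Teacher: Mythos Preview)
Your proposal is correct and follows essentially the same route as the paper: the paper sets $X=X_1$, $Y=X_2+\cdots+X_{r+1}$, expands $\mathbb{E}\,X(X+Y)^{2n-1}$ by the binomial theorem, drops the odd-moment terms, uses the i.i.d.\ symmetry to conclude $(r+1)\,\mathbb{E}\,X(X+Y)^{2n-1}=\mathbb{E}(X+Y)^{2n}$, and then applies $\binom{2n-1}{2k-1}=\tfrac{k}{n}\binom{2n}{2k}$. Your Euler-homogeneity framing of the symmetry step is just a repackaging of the paper's observation that summing the equal quantities $\mathbb{E}\,X_i S^{2n-1}$ over $i$ yields $\mathbb{E}\,S^{2n}$, and your preliminary expansion \eqref{plan-step1} is motivational but not actually used.
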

\begin{proof}
Denote $X = X_{1}$ and $Y = X_{2} + \cdots + X_{r+1}$ and use the binomial 
theorem to obtain 
\begin{eqnarray*}
\mathbb{E} X(X+Y)^{2n-1} &  = & 
\sum_{k=0}^{2n-1} \binom{2n-1}{k} \mathbb{E}X^{k+1} \mathbb{E} Y^{2n-1-k} \\
& = & 
\sum_{k=0}^{n} \binom{2n-1}{2k-1} \mathbb{E}X^{2k} \mathbb{E}Y^{2n-2k} + 
\sum_{k=0}^{n-1} \binom{2n-1}{2k} \mathbb{E}X^{2k+1} \mathbb{E}Y^{2n-1-2k}.
\end{eqnarray*}
\noindent
The second sum vanishes as it only includes odd moments. Therefore
\begin{equation*}
\mathbb{E} X(X+Y)^{2n-1}  = 
\sum_{k=0}^{n} \binom{2n-1}{2k-1} \mathbb{E}X^{2k} \mathbb{E}Y^{2n-2k}.
\end{equation*}

The identity
\begin{eqnarray*} 
\mathbb{E} X (X+Y)^{2n-1} & = & \mathbb{E} X_{1} (X_{1} + X_{2} + 
\cdots + X_{r+1})^{2n-1} \\
 & = & \mathbb{E} X_{2} (X_{1} + X_{2} + 
\cdots + X_{r+1})^{2n-1} \\
 & = & \ldots \\
 & = & \mathbb{E} X_{r+1} (X_{1} + X_{2} + 
\cdots + X_{r+1})^{2n-1}, 
\end{eqnarray*}
\noindent 
comes from the fact that all the random variables have the same 
distribution.  The elementary identity 
\begin{equation}
\binom{2n-1}{2k-1} = \frac{k}{n} \binom{2n}{2k}
\end{equation}
\noindent
yields
\begin{equation*}
(r+1) \mathbb{E}X (X+Y)^{2n-1} =
\mathbb{E}(X+Y)(X+Y)^{2n-1} =
\mathbb{E}(X+Y)^{2n}.
\end{equation*}
\noindent
This gives the result.
\end{proof}

\begin{theorem}
The polynomials $B_{n}^{(\nu)}(r)$ satisfy the recurrences 
\begin{equation}
\label{recu-1a}
B_{n}^{(\nu)}(r+1) = \frac{r+1}{n} 
\sum_{k=1}^{n} k \binom{n}{k} \frac{(\nu+1)_{n}}{(\nu+1)_{k} (\nu+1)_{n-k}}
B_{n-k}^{(\nu)}(r)
\end{equation}
\noindent
and 
\begin{equation}
\label{recu-1b}
B_{n}^{(\nu)}(r) = \sum_{k=1}^{n} \left[ \frac{k(r+1)}{n}-1 \right] 
\binom{n}{k} 
\frac{(\nu+1)_{n}}{(\nu+1)_{k} (\nu+1)_{n-k}} B_{n-k}^{(\nu)}(r).
\end{equation}
\end{theorem}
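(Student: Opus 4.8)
The plan is to realize $B_n^{(\nu)}(r)$ as normalized even moments of a sum of i.i.d.\ random variables and then to feed this into the Lemma. The key observation is that $\tilde{I}_{\nu}$ is a moment generating function: if $X$ is the symmetric random variable with density proportional to $(1-t^{2})^{\nu-1/2}$ on $(-1,1)$, which is a genuine probability law for $\nu>-1/2$, then a Beta-integral computation gives the even moments $\mathbb{E}\,X^{2k} = \frac{(2k)!\,\nu!}{4^{k}\,k!\,(k+\nu)!}$ while all odd moments vanish, so that $\mathbb{E}\,e^{zX}=\sum_{k\ge0}\frac{z^{2k}}{(2k)!}\mathbb{E}\,X^{2k}=\sum_{k\ge0}\frac{\nu!}{k!(k+\nu)!}(z/2)^{2k}$ coincides with $\tilde{I}_{\nu}(z)$, the $r=1$ case of \eqref{expansion-1}.

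Next I would take $X_1,\dots,X_{r+1}$ i.i.d.\ copies of $X$ and compare coefficients in $[\tilde{I}_{\nu}(z)]^{r}=\mathbb{E}\,e^{z(X_1+\cdots+X_r)}$ against \eqref{expansion-1}, which identifies $\mathbb{E}\,(X_1+\cdots+X_r)^{2n} = \frac{(2n)!\,\nu!}{4^{n}\,n!\,(n+\nu)!}\,B_n^{(\nu)}(r)$, and likewise for $r+1$ summands (with $\mathbb{E}\,X_1^{2k}$ the special case $r=1$, consistent with $B_k^{(\nu)}(1)=1$). Substituting these three moment formulas into \eqref{mess-1} and simplifying is then routine: the identity $\binom{2n}{2k}(2k)!(2n-2k)!=(2n)!$ collapses the binomial, the powers of $4$ combine to $4^{n}$, and after dividing through by the common factor $\frac{(2n)!\,\nu!}{4^{n}n!(n+\nu)!}$ the surviving factorial ratio $\frac{n!\,(n+\nu)!\,\nu!}{k!\,(k+\nu)!\,(n-k)!\,(n-k+\nu)!}$ is rewritten, via $(\nu+1)_m=(m+\nu)!/\nu!$, as $\binom{n}{k}\frac{(\nu+1)_n}{(\nu+1)_k(\nu+1)_{n-k}}$. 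This yields \eqref{recu-1a} exactly.

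For the second recurrence \eqref{recu-1b} I would avoid a fresh computation and instead combine \eqref{recu-1a} with the first identity \eqref{first-relation}. Write $c_{n,k}=\binom{n}{k}\frac{(\nu+1)_n}{(\nu+1)_k(\nu+1)_{n-k}}$ and split the bracket $\frac{k(r+1)}{n}-1$. The $\frac{k(r+1)}{n}$ part sums, by \eqref{recu-1a}, to $B_n^{(\nu)}(r+1)$, while \eqref{first-relation}, reindexed as $B_n^{(\nu)}(r+1)=\sum_{k=0}^{n}c_{n,k}B_{n-k}^{(\nu)}(r)$ with $k=0$ term equal to $B_n^{(\nu)}(r)$ since $c_{n,0}=1$, shows that $\sum_{k=1}^{n}c_{n,k}B_{n-k}^{(\nu)}(r)=B_n^{(\nu)}(r+1)-B_n^{(\nu)}(r)$. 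Subtracting gives $B_n^{(\nu)}(r+1)-(B_n^{(\nu)}(r+1)-B_n^{(\nu)}(r))=B_n^{(\nu)}(r)$, which is \eqref{recu-1b}. Alternatively, \eqref{recu-1b} is Euler's formula \eqref{euler-000} applied verbatim with $c_k=\nu!/(k!(k+\nu)!)$, $d_k=c_kB_k^{(\nu)}(r)$ and $c_0=1$.

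The one genuine subtlety, rather than the factorial algebra, is that the law with density $(1-t^{2})^{\nu-1/2}$ exists only for $\nu>-1/2$. Since both sides of \eqref{recu-1a} are polynomials in $r$ whose coefficients are rational in $\nu$, an identity valid on the interval $\nu>-1/2$ persists for all admissible $\nu$ by analytic continuation, the same extension principle already noted for passing from $r\in\mathbb{N}$ to $r\in\mathbb{R}$. Equivalently, one may bypass existence altogether by reading $\mathbb{E}$ as the formal linear functional on polynomials prescribed by the moment values above, with joint moments defined multiplicatively across the independent coordinates; the proof of the Lemma uses only linearity, the binomial theorem, and vanishing odd moments, all of which hold in this formal setting.
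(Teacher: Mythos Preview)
Your proof is correct and follows essentially the same route as the paper: identify $B_n^{(\nu)}(r)$ as normalized even moments of sums of i.i.d.\ centered-beta variables, feed these into the Lemma \eqref{mess-1} to obtain \eqref{recu-1a}, and then combine with the binomial-expansion identity to get \eqref{recu-1b}. The only cosmetic difference is that for \eqref{recu-1b} the paper re-derives the relation $B_n^{(\nu)}(r+1)=\sum_k c_{n,k}B_{n-k}^{(\nu)}(r)$ probabilistically inline rather than invoking the already-established \eqref{first-relation} as you do; your remarks on extending beyond $\nu>-1/2$ via analytic continuation (or a formal moment functional) and on reading \eqref{recu-1b} directly as Euler's formula \eqref{euler-000} are apt additions that the paper does not make explicit.
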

\begin{proof}
Assume $\{ X_{i} \}$ is a collection of independent random 
variables, identically
distributed with a centered beta distribution
\begin{equation}
f_{\nu}(x) = 
\begin{cases}
\frac{1}{B\left( \nu + \tfrac{1}{2}, \tfrac{1}{2} \right)}
(1 - x^{2})^{\nu - \tfrac{1}{2}}, & \quad \text{ for } |x| < 1 \\
0 & \quad \text{ otherwise}.
\end{cases}
\label{beta}
\end{equation}
\noindent 
The associated characteristic function is 
\begin{equation}
\varphi_{\nu}(z) = \mathbb{E} e^{\imath z X_{i}} = 
\sum_{n=0}^{\infty} \mathbb{E} X_{i}^{2n} \, \frac{\left(-1\right)^n z^{2n}}{(2n)!} = 
\tilde{J}_{\nu}(z)
\end{equation}
\noindent
in view of
\cite[Entry 9.1.20]{abramowitz-1972a}, \cite[Entry 8.411.8]{gradshteyn-2007a}
\[
J_{\nu}\left(z\right) = \frac{\left(z/2\right)^\nu}{\pi^{1/2}\Gamma\left(\nu+\frac{1}{2}\right)}\int_{0}^{1}\left(1-t^2\right)^{\nu-\frac{1}{2}}\cos\left(zt\right) dt.
\]
In particular 
\begin{equation}
\mathbb{E} X_{i}^{2n} = \frac{1}{(\nu+1)_{n}} 
\frac{ (2n)!}{n! 2^{2n}} \text{ and }
\mathbb{E} X_{i}^{2n+1} = 0.
\end{equation}
\noindent
The definition of $B_{n}^{(\nu)}(r)$ and 
\begin{equation}
\left[ \tilde{J}_{\nu}(z) \right]^{r} = 
\left( \mathbb{E}e^{\imath z X_{1}} \right)^{r} 
= \mathbb{E} e^{\imath z(X_{1}+\cdots + X_{r})}
\end{equation}
imply
\begin{equation}
\mathbb{E} (X_{1}+ \cdots + X_{r})^{2n} = 
\frac{1}{(\nu+1)_{n}} \frac{(2n)!}{n! 2^{2n}} B_{n}^{(\nu)}(r).
\end{equation}
\noindent
Replacing in \eqref{mess-1} yields
\begin{multline*}
\frac{1}{(\nu+1)_{n}} \frac{(2n)!}{n! 2^{2n}} B_{n}^{(\nu)}(r+1) = \\
\frac{r+1}{n} 
\sum_{k=1}^{n} \binom{2n}{2k} k 
\frac{1}{(\nu+1)_{k}} 
\frac{(2k)!}{k!2^{2k}} 
\frac{1}{(\nu+1)_{n-k}} 
\frac{(2n-2k)!}{(n-k)!2^{2n-2k}} 
B_{n-k}^{(\nu)}(r),
\end{multline*}
\noindent
that reduces to \eqref{recu-1a}.

The second recurrence is obtained by applying the binomial formula
\begin{eqnarray*}
\mathbb{E} (X_{1}+ \cdots + X_{r+1})^{2n} & = & 
\sum_{k=0}^{n} \binom{2n}{2k} \mathbb{E}X_{1}^{2k} 
\mathbb{E} (X_{2}+ \cdots + X_{r+1})^{2n-2k} \\
& = & 
\mathbb{E} (X_{2} + \cdots + X_{r+1})^{2n} + 
\sum_{k=1}^{n} \binom{2n}{2k} \mathbb{E}X_{1}^{2k} 
\mathbb{E} (X_{2}+ \cdots + X_{r+1})^{2n-2k},
\end{eqnarray*}
\noindent
to obtain
\begin{multline*}
\mathbb{E} (X_{2} + \cdots + X_{r+1})^{2n}  = 
\mathbb{E} (X_{1} + \cdots + X_{r+1})^{2n}  \\
 - \sum_{k=1}^{n} \binom{2n}{2k} \mathbb{E}X_{1}^{2k} 
\mathbb{E} (X_{2}+ \cdots + X_{r+1})^{2n-2k}.
\end{multline*}
\noindent
In terms of the polynomials $B_{n}^{(\nu)}(r)$, the previous relation becomes 
\begin{eqnarray*}
\frac{1}{(\nu+1)_{n}} \frac{(2n)!}{n! 2^{2n}} B_{n}^{(\nu)}(r) & = & 
\frac{1}{(\nu+1)_{n}} \frac{(2n)!}{n! 2^{2n}} B_{n}^{(\nu)}(r+1)  \\
& & - \sum_{k=1}^{n} \binom{2n}{2k} 
\frac{1}{(\nu+1)_{k}} 
\frac{(2k)!}{k! 2^{2k}} 
\frac{1}{(\nu+1)_{n-k}} 
\frac{(2n-2k)!}{(n-k)! 2^{2n-2k}} 
B_{n-k}^{(\nu)}(r) \\
& = & \frac{1}{(\nu+1)_{n}} 
\frac{(2n)!}{n! 2^{2n}} 
\left( B_{n}^{(\nu)}(r+1) - 
\sum_{k=1}^{n}  \binom{n}{k} \frac{(\nu + 1)_{n}}{(\nu+1)_{k} (\nu + 1)_{n-k}} 
B_{n-k}^{(\nu)}(r) \right). 
\end{eqnarray*}
\noindent
Using the expression for $B_{n}^{(\nu)}(r)$ from \eqref{recu-1a} yields
\eqref{recu-1b}.
\end{proof}

\begin{note}
The identity \eqref{first-relation} follows by replacing \eqref{recu-1b} into 
\eqref{recu-1a}.
\end{note}

\section{The Cholewinski connection}
\label{sec-cholewinski}

The normalized Bessel function \eqref{bess-nor00} is written as 
\begin{equation}
\tilde{I}_{\nu}(z) = \sum_{n=0}^{\infty} \frac{z^{2n}}{b_{2n}(\nu)}
\end{equation}
\noindent
where 
\begin{equation}
b_{2n}(\nu) = 2^{2n} \frac{n! (n+ \nu)!}{\nu!}.
\end{equation}
\noindent
In his treatise  \cite{cholewinski-1988a}, Cholewinski defines a modified binomial coefficient
\begin{equation}
\binom{b_{2n}\left(\nu\right)}{b_{2k}\left(\nu\right)}=
\frac{b_{2n}\left(\nu\right)}{b_{2n-2k}\left(\nu\right)b_{2k}\left(\nu\right)}
=\binom{n}{k}\frac{\nu!\left(\nu+n\right)!}
{\left(\nu+n-k\right)!\left(\nu+k\right)!}
\end{equation}
and a new convolution by 
\begin{equation}
\left(x*y\right)^{\alpha}
=\sum_{k=0}^{+\infty}\binom{b_{\alpha}\left(\nu\right)}{b_{2k}
\left(\nu\right)}x^{2k}y^{\alpha-2k}=
y^{\alpha} \pFq21{- \tfrac{\alpha}{2}, - \tfrac{\alpha}{2} - \nu}{\nu+1}
{\frac{x^{2}}{y^{2}}}.
\end{equation}
\noindent 
Here ${_{2}F_{1}}$ is the classical hypergeometric function. 

In this notation 
\begin{equation}
\left[ \tilde{I}_{\nu}(z) \right]^{r} = 
\sum_{n=0}^{\infty} B_{n}^{(\nu)}(r) \frac{z^{2n}}{b_{2n}}.
\label{notation-cho}
\end{equation}

The next result gives a recurrence for a normalization of 
the polynomials $B_{n}^{(\nu)}\left(r\right)$ in Cholewinski's notations.

\begin{theorem}
The sequence of normalized polynomials 
\begin{equation}
\tilde{B}_{n}^{(\nu)}(r) = 
\frac{(2n)!}{n! 2^{2n} (\nu + 1)_{n}} B_{n}^{(\nu)}(r)
\label{cho-11}
\end{equation}
\noindent
is of binomial type; that is, they satisfy 
\begin{equation}
\tilde{B}_{n}^{(\nu)}(r+s) = \sum_{k=0}^{n} \binom{n}{k} 
\tilde{B}_{k}^{(\nu)}(r)
\tilde{B}_{n-k}^{(\nu)}(s).
\end{equation}
\noindent
The (unnormalized) polynomials $B_{n}^{(\nu)}(r)$ satisfy the identity
\begin{equation}
B_{n}^{(\nu)}(r+s) = \sum_{k=0}^{n} \binom{b_{2n}}{b_{2k}} 
B_{k}^{(\nu)}(r)
B_{n-k}^{(\nu)}(s).
\end{equation}
\end{theorem}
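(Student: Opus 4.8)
The plan is to obtain both displayed identities from the single multiplicative relation $[\tilde{I}_{\nu}(z)]^{r+s} = [\tilde{I}_{\nu}(z)]^{r}\,[\tilde{I}_{\nu}(z)]^{s}$, read off in the Cholewinski expansion \eqref{notation-cho}. The unnormalized identity is the more fundamental of the two, so I would prove it first and then deduce the binomial-type statement for $\tilde{B}_{n}^{(\nu)}$ by substituting the normalization \eqref{cho-11}. Expanding each factor by \eqref{notation-cho} gives
\begin{equation*}
\sum_{n=0}^{\infty} B_{n}^{(\nu)}(r+s)\,\frac{z^{2n}}{b_{2n}}
= \left( \sum_{j=0}^{\infty} B_{j}^{(\nu)}(r)\,\frac{z^{2j}}{b_{2j}} \right)
\left( \sum_{l=0}^{\infty} B_{l}^{(\nu)}(s)\,\frac{z^{2l}}{b_{2l}} \right),
\end{equation*}
and collecting the coefficient of $z^{2n}$ by the Cauchy product, then multiplying through by $b_{2n}$, produces
\begin{equation*}
B_{n}^{(\nu)}(r+s) = \sum_{k=0}^{n} \frac{b_{2n}}{b_{2k}\,b_{2n-2k}}\, B_{k}^{(\nu)}(r)\, B_{n-k}^{(\nu)}(s).
\end{equation*}
Recognizing $b_{2n}/(b_{2k}b_{2n-2k})$ as Cholewinski's modified coefficient $\binom{b_{2n}}{b_{2k}}$ yields the unnormalized identity verbatim; this part is pure bookkeeping.

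For the binomial-type statement I would insert $B_{n}^{(\nu)} = (b_{2n}/(2n)!)\,\tilde{B}_{n}^{(\nu)}$ from \eqref{cho-11} into the identity just proved. The three normalization factors combine with $\binom{b_{2n}}{b_{2k}}$, and after cancelling the powers of two and the Pochhammer symbols $(\nu+1)_{\bullet}$ the surviving factorials collapse to a single ordinary binomial coefficient, leaving
\begin{equation*}
\tilde{B}_{n}^{(\nu)}(r+s) = \sum_{k=0}^{n} \binom{n}{k}\, \tilde{B}_{k}^{(\nu)}(r)\, \tilde{B}_{n-k}^{(\nu)}(s).
\end{equation*}
The main obstacle is precisely this collapse: one must track every power of $2$, every factor $(\nu+1)_{\bullet}$, and the ratio of $(2n)!$, $(2k)!$ and $(2n-2k)!$ so that they assemble into the coefficient asserted in the theorem and not into some neighboring coefficient such as $\binom{2n}{2k}$. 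This is the one point where the factorial accounting must be carried out with care, and getting the normalization constant in \eqref{cho-11} exactly right is what makes the reduction land on an ordinary binomial coefficient.

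A conceptually cleaner route, which simultaneously explains why a binomial-type identity should hold and isolates the correct normalization, is the exponential one. Setting $u = z^{2}$ in \eqref{bessel-exp1} and using \eqref{prod-111} writes $[\tilde{I}_{\nu}(\sqrt{u})]^{r} = \exp\!\big(r\,g(u)\big)$ with $g(u) = \sum_{p\ge 1}\frac{(-1)^{p+1}}{p}\zeta_{\nu}(2p)\,u^{p}$ and $g(0)=0$. Comparing coefficients in $e^{(r+s)g(u)} = e^{r g(u)}\,e^{s g(u)}$ shows directly that the coefficient sequence of this exponential generating function is of binomial type in the classical Rota sense; this is the structural reason behind the identity and makes transparent exactly which scalar multiple of $B_{n}^{(\nu)}$ is the binomial-type sequence, providing an independent check on the normalization chosen in \eqref{cho-11}.
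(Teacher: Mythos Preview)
Your derivation of the unnormalized identity is exactly the paper's: expand $[\tilde I_\nu(z)]^{r+s}=[\tilde I_\nu(z)]^{r}[\tilde I_\nu(z)]^{s}$ via \eqref{notation-cho} and read off the Cauchy product. No difference there.

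The step that fails is the passage to the normalized identity. Carrying out your own substitution $B_m^{(\nu)}=(b_{2m}/(2m)!)\,\tilde B_m^{(\nu)}$ in the unnormalized identity, the factors $b_{2k}$ and $b_{2n-2k}$ cancel completely against $\binom{b_{2n}}{b_{2k}}=b_{2n}/(b_{2k}b_{2n-2k})$, and what survives is
\[
\tilde B_n^{(\nu)}(r+s)=\sum_{k=0}^{n}\frac{(2n)!}{(2k)!\,(2n-2k)!}\,\tilde B_k^{(\nu)}(r)\,\tilde B_{n-k}^{(\nu)}(s)=\sum_{k=0}^{n}\binom{2n}{2k}\tilde B_k^{(\nu)}(r)\,\tilde B_{n-k}^{(\nu)}(s),
\]
precisely the coefficient $\binom{2n}{2k}$ you warned against, not $\binom{n}{k}$. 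A check at $n=2$ confirms it: the cross term in $\tilde B_2^{(\nu)}(r+s)$ equals $\tfrac{3}{2(\nu+1)^{2}}\,rs=6\,\tilde B_1^{(\nu)}(r)\tilde B_1^{(\nu)}(s)$, not $2\,\tilde B_1^{(\nu)}(r)\tilde B_1^{(\nu)}(s)$. Your exponential argument in the last paragraph, if actually executed, points to the same conclusion: the coefficient of $u^{n}/n!$ in $[\tilde I_\nu(\sqrt u)]^{r}=e^{rg(u)}$ is $B_n^{(\nu)}(r)/\bigl(2^{2n}(\nu+1)_n\bigr)=n!\,B_n^{(\nu)}(r)/b_{2n}$, and \emph{that} is the sequence of binomial type, differing from $\tilde B_n^{(\nu)}$ by the factor $(2n)!/n!$. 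So the ``independent check'' you propose would refute the normalization in \eqref{cho-11} rather than confirm it. The paper's proof never carries out this reduction either (and its alternative moment computation, via the binomial theorem on $(U+V)^{2n}$, likewise produces $\binom{2n}{2k}$), so the defect lies in the statement of the normalized identity itself, not in your strategy.
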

\begin{proof}
Start with 
\begin{eqnarray*}
\left[\tilde{I}_{\nu}\left(z\right)\right]^{r+s} & = & 
\left[\tilde{I}_{\nu}\left(z\right)\right]^{r}
\left[\tilde{I}_{\nu}\left(z\right)\right]^{s}\\
 & = & \sum_{p,q}B_{p}^{\left(\nu\right)}\left(r\right)
\frac{z^{2p}}{b_{2p}}B_{q}^{\left(\nu\right)}
\left(s\right)\frac{z^{2q}}{b_{2q}}\\
 & = & \sum_{n=0}^{\infty} \left(\sum_{p=0}^{n}\frac{b_{2n}}{b_{2p}b_{2n-2p}}
B_{p}^{\left(\nu\right)}\left(r\right)
B_{n-p}^{\left(\nu\right)}\left(s\right)\right)\frac{z^{2n}}{b_{2n}}.
\end{eqnarray*}
The result now follows from \eqref{notation-cho}.

\smallskip

An alternative proof of \eqref{cho-11} uses the moment representation
\begin{equation}
B_{n}^{\left(\nu\right)}\left(r\right)
=\left(\nu+1\right)_{n}\frac{n!}{2n!}2^{2n}
\mathbb{E}\left(X_{1}+\dots+X_{r}\right)^{2n}
\end{equation}
where $\left\{ X_{i}\right\} $ are independent, identically distributed
random variables with probability density \eqref{beta}. Simply observe that 
\begin{eqnarray*}
B_{n}^{\left(\nu\right)}\left(r+s\right) 
& = & \left(\nu+1\right)_{n}\frac{n!}{(2n)!}2^{2n}
\mathbb{E}\left(X_{1}+\dots+X_{r}+X_{r+1}+\dots+X_{r+s}\right)^{2n}\\
 & = & \left(\nu+1\right)_{n}\frac{n!}{(2n)!}2^{2n}\mathbb{E}
\left( \left(X_{1}+\dots+X_{r} \right)+ \left(Y_{1}+\dots+Y_{s} 
\right)\right)^{2n}
\end{eqnarray*}
\noindent
with $Y_{j} = X_{r+j}$ and use the binomial theorem. 
\end{proof}

\begin{theorem}
The polynomials $B_{n}^{(\nu)}(r)$ are given by 
\begin{equation}
B_{n}^{(\nu)}(r) = 
\frac{\mathbb{E} (X_{1}+ \cdots + X_{r})^{2n}}{\mathbb{E}(X_{1})^{2n}}.
\end{equation}
\noindent
The normalized polynomials are simply given by 
\begin{equation}
\tilde{B}_{n}^{(\nu)}(r) =
\mathbb{E} (X_{1}+ \cdots + X_{r})^{2n}.
\end{equation}
\end{theorem}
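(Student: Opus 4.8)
The plan is to read off both identities directly from the moment representation established in Section~\ref{sec-recurrence2}, so that essentially no new computation is required. Recall that the characteristic-function argument there produced
\begin{equation}
\mathbb{E} (X_{1}+ \cdots + X_{r})^{2n} = \frac{1}{(\nu+1)_{n}} \frac{(2n)!}{n! \, 2^{2n}} \, B_{n}^{(\nu)}(r),
\label{moment-rep-final}
\end{equation}
for the independent, identically distributed centered beta variables $\{ X_{i} \}$ with density \eqref{beta}, together with the single-variable moments $\mathbb{E} X_{1}^{2n} = \frac{1}{(\nu+1)_{n}} \frac{(2n)!}{n! \, 2^{2n}}$.

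First I would specialize \eqref{moment-rep-final} to $r = 1$. Since $[\tilde{I}_{\nu}(z)]^{1} = \tilde{I}_{\nu}(z)$ forces $B_{n}^{(\nu)}(1) = 1$, the specialization reproduces exactly the single-variable moment, so the common prefactor $\frac{1}{(\nu+1)_{n}} \frac{(2n)!}{n! \, 2^{2n}}$ appearing in \eqref{moment-rep-final} equals $\mathbb{E} X_{1}^{2n}$. Dividing \eqref{moment-rep-final} by this prefactor then isolates
\begin{equation*}
\frac{\mathbb{E} (X_{1}+ \cdots + X_{r})^{2n}}{\mathbb{E} X_{1}^{2n}} = B_{n}^{(\nu)}(r),
\end{equation*}
which is the first assertion.

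For the normalized polynomials I would simply match prefactors. The normalization \eqref{cho-11} reads $\tilde{B}_{n}^{(\nu)}(r) = \frac{(2n)!}{n! \, 2^{2n} (\nu+1)_{n}} B_{n}^{(\nu)}(r)$, and the factor multiplying $B_{n}^{(\nu)}(r)$ on the right-hand side of \eqref{moment-rep-final} is identical to this normalization constant. Hence $\tilde{B}_{n}^{(\nu)}(r) = \mathbb{E}(X_{1}+ \cdots + X_{r})^{2n}$ with no further manipulation. There is no real obstacle here: the theorem is a direct corollary of \eqref{moment-rep-final}, and the only point meriting an explicit (and trivial) check is the $r=1$ normalization that identifies the denominator $\mathbb{E} X_{1}^{2n}$ with the prefactor in the moment formula.
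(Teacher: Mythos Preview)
Your proof is correct and follows essentially the same approach as the paper: both invoke the moment representation from Section~\ref{sec-recurrence2}, specialize to $r=1$ using $B_{n}^{(\nu)}(1)=1$ to identify the prefactor with $\mathbb{E}X_{1}^{2n}$, and then read off the two identities.
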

\begin{proof}
The moment representation 
\begin{equation}
B_{n}^{(\nu)}(r) = (\nu+1)_{n} \frac{n!}{(2n)!} 2^{2n} 
\mathbb{E} (X_{1}+ \cdots + X_{r})^{2n}
\end{equation}
\noindent 
is simplified using
\begin{equation}
B_{n}^{(\nu)}(1) = 1 = (\nu+1)_{n} \frac{n!}{(2n)!} 2^{2n} 
\mathbb{E}(X_{1})^{2n}.
\end{equation}
\end{proof}

\begin{note}
The integrals
\begin{equation}
W_{n}\left(s\right)=
\int_{\left[0,1\right]^{n}} 
\left\vert\sum_{k=1}^{n}e^{2\pi \imath x_{k}} \right\vert^{s} d\mathbf{x}
\end{equation}
have recently appeared in \cite{borweinj-2010c} in the study of short 
random walks in the plane. Their generating function is given by
\begin{equation}
\sum_{s=0}^{\infty} 
W_{n}(2s)\frac{(-z)^{s}}{s!s!}=\left[J_{0}(2\sqrt{z})\right]^{n}.
\end{equation}
\noindent
The relation
\begin{equation}
J_{0}(2\sqrt{z})=\tilde{J}_{0}(2\sqrt{z})
\end{equation}
and
\begin{equation}
\left[\tilde{J}_{0}(2\sqrt{z})\right]^{n}=
\sum_{s=0}^{\infty}B_{s}^{(0)}(n)\frac{(\imath\sqrt{z})^{2s}}{s!s!},
\end{equation}
give the link between the integrals $W_{n}(2s)$ and the polynomials 
$B_{n}^{\nu}\left(r\right)$ as 
\begin{equation}
W_{n}(2s)=B_{s}^{(0)}(n).
\end{equation}
\end{note}

\section{Some integer sequences}
\label{sec-integers}

The sequences $\{ a_{k} \}$ and $\{ M_{k} \}$ are defined there in 
\cite{bender-2003a} by the generating function 
\begin{equation}
x \left( \frac{\sqrt{x} I_{0}(2 \sqrt{x}) }{I_{1}(2 \sqrt{x})} - 2 \right) =
\sum_{k=1}^{\infty} a_{k}x^{k}
\end{equation}
\noindent
and 
\begin{equation}
M_{k} = (-1)^{k+1} (k+1)! k! a_{k+1}.
\end{equation}
\noindent 
In particular, it is stated that 
$\left\{ M_{k} \right\}$ is an integer sequence. This is proved next. An 
expression for $M_{k}$ in terms of the Bessel zeta function is given first.

\begin{theorem} 
The coefficients $M_{k}$ are given by 
\begin{equation}
M_{k} = k! (k+1)! 2^{2k} \zeta_{1}(2k).
\end{equation}
\end{theorem}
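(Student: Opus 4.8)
The plan is to express the coefficients $a_k$ directly in terms of the Bessel zeta function $\zeta_{1}$ and then substitute into the definition of $M_k$. Setting $z = 2\sqrt{x}$ (so that $x = z^{2}/4$), the generating function defining $\{a_k\}$ becomes
\[
\sum_{k=1}^{\infty} a_{k} x^{k} = \frac{z^{2}}{4}\left(\frac{z}{2}\frac{I_{0}(z)}{I_{1}(z)} - 2\right),
\]
so the whole problem reduces to expanding $\tfrac{z}{2}\,I_{0}(z)/I_{1}(z)$ near $z=0$. The quantity naturally tied to the zeros $j_{1,k}$ is $I_{2}/I_{1}$, not $I_{0}/I_{1}$; I would bridge the two with the three-term recurrence $I_{0}(z) - I_{2}(z) = \frac{2}{z}I_{1}(z)$, which gives $\frac{I_{0}(z)}{I_{1}(z)} = \frac{2}{z} + \frac{I_{2}(z)}{I_{1}(z)}$.

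Next I would identify $I_{2}/I_{1}$ with a logarithmic derivative. From $\tilde{I}_{\nu}(z) = \frac{\nu!\,2^{\nu}}{z^{\nu}}I_{\nu}(z)$ in \eqref{bess-nor00} together with the standard relation $I_{\nu}'(z) = I_{\nu+1}(z) + \frac{\nu}{z}I_{\nu}(z)$, a short computation yields $\frac{d}{dz}\log\tilde{I}_{\nu}(z) = \frac{I_{\nu+1}(z)}{I_{\nu}(z)}$, the $\frac{\nu}{z}$ terms cancelling. Taking $\nu = 1$ and differentiating the logarithmic expansion \eqref{prod-111} term by term gives
\[
\frac{I_{2}(z)}{I_{1}(z)} = \frac{d}{dz}\log\tilde{I}_{1}(z) = \sum_{p=1}^{\infty} 2(-1)^{p+1}\zeta_{1}(2p)\,z^{2p-1}.
\]
Substituting back, the factor $\frac{2}{z}$ combines with $\frac{z}{2}$ to produce a constant $1$ that absorbs part of the $-2$, leaving
\[
\frac{z}{2}\frac{I_{0}(z)}{I_{1}(z)} - 2 = -1 + \sum_{p=1}^{\infty} (-1)^{p+1}\zeta_{1}(2p)\,z^{2p}.
\]

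Finally I would read off the coefficients. Using $z^{2p} = (4x)^{p} = 2^{2p}x^{p}$ and multiplying through by $\frac{z^{2}}{4} = x$, comparison of the coefficient of $x^{k}$ (for $k \geq 2$, which forces $p = k-1$) gives $a_{k} = (-1)^{k} 2^{2k-2}\zeta_{1}(2k-2)$, hence $a_{k+1} = (-1)^{k+1} 2^{2k}\zeta_{1}(2k)$. Inserting this into $M_{k} = (-1)^{k+1}(k+1)!\,k!\,a_{k+1}$, the two factors of $(-1)^{k+1}$ multiply to $+1$ and the claimed identity $M_{k} = k!\,(k+1)!\,2^{2k}\zeta_{1}(2k)$ follows. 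The main obstacle is not any single hard estimate but careful bookkeeping: establishing the clean cancellation in $\frac{d}{dz}\log\tilde{I}_{\nu} = I_{\nu+1}/I_{\nu}$, tracking the index shift $p = k-1$, and keeping the powers of $2$ straight through the substitution $z = 2\sqrt{x}$. As a consistency check, the value $\zeta_{1}(2) = \tfrac{1}{8}$ gives $M_{1} = 2 \cdot 4 \cdot \tfrac{1}{8} = 1$, consistent with the integrality asserted in \cite{bender-2003a}.
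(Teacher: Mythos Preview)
Your proof is correct and follows essentially the same approach as the paper: both differentiate the logarithmic expansion \eqref{prod-111} (at index $\nu+1=1$) and relate the resulting series to $I_{0}/I_{1}$ via a Bessel recurrence, then read off the coefficients. The only cosmetic differences are that the paper works with general $\nu$ and uses $I_{\nu+1}'=I_{\nu}-\tfrac{\nu+1}{z}I_{\nu+1}$ to reach $I_{\nu}/I_{\nu+1}$ in one step, whereas you specialize to $\nu=0$ from the start and take a slightly longer detour through $I_{2}/I_{1}$ and the three-term recurrence; the content is the same.
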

\begin{proof}
The identity
\begin{equation}
\label{real-bessel1}
\frac{d}{dz} I_{\nu+1}(z) = I_{\nu}(z) - \frac{\nu+1}{z} I_{\nu+1}(z),
\end{equation}
\noindent
implies 
\begin{equation}
\frac{d}{dz} \log I_{\nu+1}(z) = 
\frac{I_{\nu}(z)}{I_{\nu+1}(z)} - \frac{\nu+1}{z}.
\end{equation}
\noindent
Then \eqref{prod-111} gives 
\begin{equation}
\frac{d}{dz} \log I_{\nu+1}(z) = \frac{\nu+1}{z} + 
2 \sum_{p=1}^{\infty} (-1)^{p+1} \zeta_{\nu+1}(2p) z^{2p-1},
\end{equation}
\noindent
and \eqref{real-bessel1} then produces 
\begin{equation}
\frac{I_{\nu}(2 \sqrt{x})}{I_{\nu+1}(2 \sqrt{x})} = \frac{\nu+1}{\sqrt{x}}
+ 2 \sum_{p=1}^{\infty} (-1)^{p+1} \zeta_{\nu+1}(2p) (2 \sqrt{x})^{2p-1}.
\end{equation}
\noindent
Therefore 
\begin{equation}
\label{nu=0}
x \left( \frac{\sqrt{x} I_{\nu}(2 \sqrt{x})}{(\nu+1) I_{\nu+1}(2 \sqrt{x})} - 
2 \right) = -x + 
\frac{1}{\nu+1} 
\sum_{\ell=1}^{\infty} (-1)^{\ell+1} \zeta_{\nu+1}(2 \ell) 2^{2 \ell} 
x^{ \ell +1}.
\end{equation}
\noindent
The special case $\nu=0$ gives the result. 
\end{proof}

\begin{corollary}
The number $M_{k}$ is an integer.
\end{corollary}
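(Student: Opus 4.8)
The plan is to convert the closed form $M_k = k!\,(k+1)!\,2^{2k}\,\zeta_1(2k)$ from the preceding theorem into a self-contained recurrence for the $M_k$ with \emph{integer} structure constants, and then run a strong induction with base case $M_1 = 1$. First I would specialize the Bessel-zeta recurrence \eqref{recu-zeta} to $\nu = 1$, namely $(n+1)\zeta_1(2n) = \sum_{r=1}^{n-1}\zeta_1(2r)\zeta_1(2n-2r)$, and substitute $\zeta_1(2m) = M_m/\big(m!\,(m+1)!\,2^{2m}\big)$. The factor $2^{2n}$ cancels throughout, and after multiplying by $(n!)^2$ the relation collapses to
\begin{equation*}
M_n = \sum_{r=1}^{n-1} \frac{(n!)^2}{r!\,(r+1)!\,(n-r)!\,(n-r+1)!}\, M_r\, M_{n-r},
\end{equation*}
which already shows inductively that $M_n\in\mathbb{Q}$.

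The next step is to identify the structure constant. Applying $\tfrac{1}{r+1}\binom{n}{r} = \tfrac{1}{n+1}\binom{n+1}{r+1}$ to each Catalan-type factor gives
\begin{equation*}
\frac{(n!)^2}{r!\,(r+1)!\,(n-r)!\,(n-r+1)!} = \frac{1}{(n+1)^2}\binom{n+1}{r}\binom{n+1}{r+1} = \frac{1}{n+1}\,N(n+1,r+1),
\end{equation*}
where $N(m,j) = \tfrac1m\binom mj\binom m{j-1}$ is the Narayana number. The recurrence thus becomes $(n+1)\,M_n = \sum_{r=1}^{n-1} N(n+1,r+1)\, M_r\, M_{n-r}$. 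Since Narayana numbers are integers, the induction hypothesis $M_1,\dots,M_{n-1}\in\mathbb{Z}$ makes the right-hand side a manifest integer $S_n$, and the whole statement reduces to the single divisibility $(n+1)\mid S_n$, after which $M_n = S_n/(n+1)\in\mathbb{Z}$.

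That divisibility is the crux, and I expect it to be the hard part. The clean case is $n+1=p$ prime: then $\binom{n+1}{r}$ and $\binom{n+1}{r+1}$ are divisible by $p$ for all $1\le r\le n-1$, so every weight $N(n+1,r+1)$ is divisible by $p$ and $p\mid S_n$ is immediate. For composite $n+1$ the valuation of an individual Narayana weight can fall short — for instance, for $n+1=6$ the weight $N(6,3)=50$ is prime to $3$ — and the missing factor is supplied instead by the $M$-factors of that summand (here by $M_3=3$). A general argument must therefore balance, prime by prime and via Kummer's theorem on carries, the valuation of $N(n+1,r+1)$ against that carried by $M_r M_{n-r}$ under the induction hypothesis; making this bookkeeping uniform in $n$ is the delicate point.

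A structurally cleaner route, which I would pursue in parallel, is to make integrality automatic rather than arithmetic. One possibility is to realise the recurrence $(n+1)M_n = \sum_r N(n+1,r+1)M_r M_{n-r}$ as a Narayana-weighted rooted decomposition of a combinatorial family, so that $M_n$ counts a set. Another is to transport a known integrality statement: combining the $\nu = 0$ case of the generating function for $b_k(\nu)$ in Bender et al.'s theorem with the definition of $a_k$ yields $a_k = b_k(0)/\big(k!\,(k-1)!\big)$ and hence $M_k = (-1)^{k+1}b_{k+1}(0)$, which ties the $M_k$ directly to Bender et al.'s sequence. Either of these would bypass the valuation analysis, and I would treat the prime-power divisibility as the fallback if no clean model presents itself.
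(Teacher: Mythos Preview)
Your setup is correct: the quadratic recurrence
\[
M_n \;=\; \sum_{r=1}^{n-1}\frac{(n!)^2}{r!\,(r+1)!\,(n-r)!\,(n-r+1)!}\,M_rM_{n-r}
\;=\;\frac{1}{n+1}\sum_{r=1}^{n-1}N(n+1,r+1)\,M_rM_{n-r}
\]
follows exactly as you say from the Bessel-zeta recurrence \eqref{recu-zeta} at $\nu=1$, and it is precisely the identity recorded in the Note following the corollary. The base case $M_1=1$ is also correct.

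The gap is the one you yourself flag: the divisibility $(n+1)\mid S_n$ is the entire content of the statement, and nothing in your outline proves it. Your prime case is fine, but for composite $n+1$ you would need, for each prime $p\mid n+1$, a \emph{lower bound} on $v_p(M_r)+v_p(M_{n-r})$ that is uniform in $r$ and strong enough to compensate the shortfall in $v_p\big(N(n+1,r+1)\big)$; the induction hypothesis ``$M_r\in\mathbb{Z}$'' gives only $v_p(M_r)\ge 0$, which is not enough (your own example $n+1=6$, $r=2$ already shows the weight alone fails). A Kummer-carry analysis does not produce such a bound without additional input, so as written the induction does not close. Your alternative route through $M_k=(-1)^{k+1}b_{k+1}(0)$ is circular: the integrality of $b_n(0)$ is not established anywhere (the paper in fact points out that Bender et al.'s integrality claim for $b_n(\nu)$ is false in general), so you would be assuming what you want to prove.

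The paper avoids the divisibility problem altogether. It imports from \cite{amdeberhan-2013e} the fact that
\[
\tilde a_n \;=\; 2^{2n+1}(n+1)!(n-1)!\,\zeta_1(2n)
\]
is an integer, notes that $M_n = n\tilde a_n/2$, and then uses a parity result from the same reference (namely that $\tilde a_n$ is odd only when $n=2(2^m-1)$, hence $n$ is even) to conclude $M_n\in\mathbb{Z}$. So the argument is by citation rather than by the recurrence; if you want a self-contained proof along your lines, the missing ingredient is a $p$-adic lower bound on $M_r$ of roughly the strength of the one implicit in the Rayleigh-polynomial/Kishore machinery, not just integrality.
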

\begin{proof}
The sequence 
\begin{equation}
\tilde{a}_{n} = 2^{2n+1} (n+1)! (n-1)! \zeta_{1}(2n)
\end{equation}
\noindent
has been shown to be an integer sequence in
\cite{amdeberhan-2013e}. The identity
\begin{equation}
M_{n} = \frac{n \tilde{a}_{n}}{2}
\end{equation}
\noindent
shows that $M_{n} \in \mathbb{N}$. Indeed, this is clear if $\tilde{a}_{n}$ 
is even. For $\tilde{a}_{n}$ odd, it must be 
that 
$n = 2(2^{m}-1)$ (see Theorem $8.1$ in \cite{amdeberhan-2013e}) and the 
result follows.
\end{proof}

\begin{note}
The recurrence
\begin{equation}
M_{n} = \sum_{r=1}^{n-1} \frac{\binom{n}{r}^{2}}{(r+1)(n-r+1)}M_{r}M_{n-r}
\end{equation}
\noindent 
with initial conditions $M_{1}=M_{2}=1$ follows from the recurrence for  
$\zeta_{1}(2n)$ given in \cite{amdeberhan-2013e}.
\end{note}

\medskip

At the end of \cite{bender-2003a}, the authors define the more general 
sequence $b_{n}(\nu)$, that has already appeared in \eqref{B-def}, by 
the generating function 
\begin{equation}
\label{def-bnu}
\sum_{n=1}^{\infty} \frac{x^{n}}{(\nu+n)! (n-1)!}b_{n}(\nu) = 
\frac{x}{(\nu+1)!}
\left( \frac{\sqrt{x}}{\nu+1}  \frac{I_{\nu}(2 \sqrt{x})}
{I_{\nu+1}(2 \sqrt{x})} - 
2 \right),
\end{equation}
\noindent
and it is stated that $b_{n}(\nu)$ is an integer sequence. This is incorrect, 
since it is easy to check that
\begin{equation}
b_{2}(\nu) = \frac{1}{\nu+1}.
\end{equation}

The next result presents a possible modification.

\begin{theorem}
The sequence $b_{n}(\nu)$ defined by \eqref{def-bnu} is given by
\begin{equation}
b_{n}(\nu) = \frac{(n+\nu)!(n-1)!}{(\nu+1)! (\nu+1)}(-1)^{n} 2^{2n-2} 
\zeta_{\nu+1}(2n-2), \text{ for } n \geq 2.
\end{equation}
\noindent
For $\nu \in \mathbb{N}$, the modified sequence 
\begin{equation}
\label{mod-seq1}
\tilde{b}_{n}(\nu) = (-1)^{n} b_{n}(\nu) 
\frac{(\nu+1)! (\nu+1)}{(\nu+n)! (n-1)!} \prod_{k=1}^{n} 
( k + \nu)^{\lf n/k \rf}, \text{ for } n \geq 2
\end{equation}
\noindent
takes integer values.
\end{theorem}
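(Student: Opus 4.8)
The plan is to handle the two assertions in sequence: first extract the closed form for $b_n(\nu)$ by comparing coefficients in its defining generating function, and then feed that closed form into the definition \eqref{mod-seq1} to reduce the integrality claim to a statement purely about the denominator of a Bessel zeta value.

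For the closed form, I would read off the coefficients directly from \eqref{def-bnu}. Its right-hand side is $\tfrac{1}{(\nu+1)!}$ times the expression appearing in \eqref{nu=0}, whose expansion in powers of $x$ has already been computed in the proof of the $M_k$ formula, namely $-x+\tfrac{1}{\nu+1}\sum_{\ell\ge 1}(-1)^{\ell+1}\zeta_{\nu+1}(2\ell)2^{2\ell}x^{\ell+1}$. For $n\ge 2$ the coefficient of $x^{n}$ on the right comes from the single term $\ell=n-1$, while on the left it is $b_n(\nu)/[(\nu+n)!\,(n-1)!]$. Equating the two and rearranging factorials gives the stated formula for $b_n(\nu)$; this step is routine bookkeeping.

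For the integrality statement I would substitute this closed form into \eqref{mod-seq1}. The sign $(-1)^n$ is squared to $1$, and the factors $(\nu+n)!$, $(n-1)!$, $(\nu+1)!$ and $(\nu+1)$ cancel completely, leaving the transparent expression $\tilde b_n(\nu)=2^{2n-2}\,\zeta_{\nu+1}(2n-2)\,\prod_{k=1}^{n}(k+\nu)^{\lfloor n/k\rfloor}$. The task is thus reduced to showing that $2^{2n-2}$ times the Bessel zeta value $\zeta_{\nu+1}(2n-2)$, multiplied by the product of shifted integers $\prod_k(k+\nu)^{\lfloor n/k\rfloor}$, is an integer for $\nu\in\mathbb{N}$. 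The natural route is to control $\zeta_{\nu+1}(2m)$, with $m=n-1$, as a rational function of $\nu$: starting from $\zeta_\mu(2)=1/[4(\mu+1)]$ and iterating the Elizalde recurrence \eqref{recu-zeta}, I would show by induction that $\zeta_\mu(2m)$ equals an integer polynomial in $\mu$ divided by a power of $2$ times a product of factors $(\mu+j)$, and then identify the exact exponent of each $(\mu+j)$. After the reindexing $\mu=\nu+1$, the product $\prod_k(k+\nu)^{\lfloor n/k\rfloor}$ is meant to furnish exactly these denominator factors, the power $2^{2n-2}$ is meant to absorb the residual power of $2$, and what survives is an integer-valued polynomial whose integrality for $\nu\in\mathbb{N}$ follows from the integrality theorem of \cite{amdeberhan-2013e}, in the same spirit as the corollary establishing $M_n\in\mathbb{N}$ (including the $2$-adic case distinction invoked there).

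The main obstacle will be pinning down the exact $\nu$-denominator of $\zeta_{\nu+1}(2m)$ and verifying that the exponents $\lfloor n/k\rfloor$ in the product indeed dominate it. The recurrence \eqref{recu-zeta} is quadratic, so each convolution $\zeta_\mu(2r)\zeta_\mu(2m-2r)$ a priori produces high denominator powers, and one must prove that after dividing by $(m+\mu)$ these collapse to precisely the claimed exponent of each factor $(\mu+j)$, the apparent extra cancellations being genuine. Carrying this exponent count through the shift $\mu=\nu+1$, $m=n-1$ and simultaneously tracking the $2$-adic valuation so that the prefactor $2^{2n-2}$ is exactly sufficient, is the delicate part of the argument; once the denominator of the Bessel zeta value is determined, the remaining integrality is a direct appeal to \cite{amdeberhan-2013e}.
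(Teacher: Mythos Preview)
Your derivation of the closed form for $b_n(\nu)$ is exactly what the paper does: both read off coefficients from \eqref{nu=0} (divided by $(\nu+1)!$) and match them with \eqref{def-bnu}.

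For the integrality claim the two approaches diverge. Your simplification
\[
\tilde b_n(\nu)=2^{2n-2}\,\zeta_{\nu+1}(2n-2)\prod_{k=1}^{n}(k+\nu)^{\lfloor n/k\rfloor}
\]
is correct and is precisely the point at which the paper stops computing and invokes a known result: by Kishore \cite{kishore-1964a}, the \emph{Rayleigh polynomial}
\[
\phi_{2m}(\mu)=2^{2m}\zeta_{\mu}(2m)\prod_{k=1}^{m}(\mu+k)^{\lfloor m/k\rfloor}
\]
is a polynomial in $\mu$ with positive integer coefficients, so setting $m=n-1$, $\mu=\nu+1$ finishes the argument in one stroke. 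Your plan instead proposes to redevelop this denominator structure from the Elizalde recurrence \eqref{recu-zeta}, which is essentially a sketch of Kishore's proof; your own ``main obstacle'' paragraph identifies exactly the nontrivial exponent count that Kishore carried out. Moreover, your final appeal to \cite{amdeberhan-2013e} and its $2$-adic case split is misplaced: that reference establishes integrality only for the single sequence attached to $\zeta_{1}$, whereas here you need the statement uniformly for all $\nu\in\mathbb{N}$, which is the content of the Rayleigh-polynomial theorem rather than of \cite{amdeberhan-2013e}. Once the denominator of $\zeta_{\nu+1}(2n-2)$ is pinned down, integrality is immediate and no further citation is needed. In short, the missing ingredient in your proposal is the reference to \cite{kishore-1964a}; with it the laborious induction you outline becomes unnecessary.
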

\begin{proof}
Comparing \eqref{nu=0} and \eqref{def-bnu} yields 
\begin{equation}
b_{n}(\nu) = \frac{(\nu+n)! (n-1)!}{(\nu+1)! (\nu+1)} (-1)^{n} 
\zeta_{\nu+1}(2n-2).
\end{equation}
\noindent
The function 
\begin{equation}
\phi_{2n}(\nu) = 2^{2n} \zeta_{\nu}(2n) 
\prod_{k=1}^{n} (\nu + k )^{\lf n/k \rf} 
\end{equation}
\noindent
has been shown to be a polynomial \cite{kishore-1964a} 
of degree $\begin{displaystyle}
1 - 2n + \sum_{k=1}^{n} \lf n/k \rf\end{displaystyle},$ with 
positive integer coefficients. This is the \textit{Rayleigh polynomial}. Then 
\begin{equation}
b_{\nu}(n) = \frac{(n+\nu)! (n-1)! (-1)^{n} \phi_{2n-2}(\nu+1)}
{(\nu+1)! (\nu+1) \prod_{k=1}^{n-1} (k+\nu+1)^{\lf (n-1)/k \rf}},
\end{equation}
\noindent
and the modified sequence \eqref{mod-seq1} reduces to $\phi_{2n-2}(\nu+1)$, 
giving the result.
\end{proof}

\medskip

\noindent
\textbf{Acknowledgments}. The second author acknowledges the partial 
support of NSF-DMS 1112656. The authors wish to thank A. Byrnes for her 
help. 


\end{document}